\newtheorem{lemma}{Lemma}[section]
\newtheorem{theorem}[lemma]{Theorem}
\newtheorem{remark}[lemma]{Remark}
\newtheorem{definition}[lemma]{Definition}
\newcommand{\cC}{\mathcal{C}}
\newcommand{\cH}{\mathcal{H}}
\newcommand{\cR}{\mathcal{R}}
\newcommand{\cB}{\mathcal{B}}
\newcommand{\cS}{\mathcal{S}}
\newcommand{\cQ}{\mathcal{Q}}
\newcommand{\ns}{\mathcal{N}}
\renewcommand{\r}{\mathbb{R}}
\newcommand{\n}{\mathbb{N}}
\newcommand{\z}{\mathbb{Z}}
\begin{document}




\begin{frontmatter}
\title{Universality conditions of unified classical and quantum reservoir computing}

\author[lab]{Francesco Monzani}
\author[lab]{Enrico Prati\corref{cor1}}
\ead{enrico.prati@unimi.it}

\affiliation[lab]{organization={Dipartimento di Fisica A. Pontremoli, Universita degli Studi di Milano},
            addressline={Via Celoria 16}, 
            city={Milano},
            postcode={20139}, 
            country={Italy}}

\cortext[cor1]{Corresponding author.}
\begin{abstract}
Reservoir computing is a versatile paradigm in computational neuroscience and machine learning, that exploits a recurrent neural network to efficiently process time-dependent information.
The power of many neural network architectures resides in their universality approximation property. As widely known, classes of reservoir computers serve as universal approximators of functionals with fading memory. The construction of such universal classes often appears context-specific, but, in fact, they follow the same principles. Here we present a unified theoretical framework and we propose a ready-made setting to secure universality, based on the minimal sufficient conditions for a class of reservoir computers to be universal, namely the fading memory and the polynomial algebra structure of the set of their associated functionals.
We test the result in the arising context of quantum reservoir computing. Guided by such a unified theorem we suggest why spatial multiplexing serves as a computational resource when dealing with quantum registers, as empirically observed in specific implementations on quantum hardware. The analysis sheds light on a unified view of classical and quantum reservoir computing.
\end{abstract}

\begin{keyword}
Reservoir computing \sep quantum machine learning \sep universality \sep quantum reservoir computing
\end{keyword}
\end{frontmatter}

\section{Introduction}
Reservoir computing is a general computational framework that exploits the nonlinear dynamics of a recurrent neural network and simple readout functions for the online processing of time-dependent inputs \cite{naka.2019, spec.issue}. The spread of reservoir computing has stemmed from two primary paradigms, namely echo state networks (ESN), for time series prediction, \cite{jaeger.first.long,  esn.jaeger.haas} and liquid state machines (LSM) respectively, the latter inspired by spiking neuronal activity for modeling online time-dependent inputs \cite{maass.nat, maass.sontag.neural}.\\
A computational architecture based on neural networks is said to be universal if for any given target function there exists an instance of the architecture realizing a mapping between the input and the output that approximates the given function with arbitrary precision. The issue of finding neural networks that serve as universal approximators has attracted considerable mathematical research during the years \cite{kolmo, boyd.chua, sand, matt, cuck, cucker_zhou_2007}, paving the way for the implementation of versatile machine learning paradigms. Universal approximation results have been proven for different neural network architectures, starting from the classical results for feed-forward \cite{ hornik.univ.fnn, cyb, hornik.3, hornik.4, spre1, spre2, stinchcombe1999neural} and for recurrent neural networks, respectively \cite{HAMMER2000107, rnn.univ}. More recently, universality has been proven also for invertible \cite{jin2024approximation, ishikawa2024universal},   for mean-field \cite{pham2023meanfield}, and for deep convolutional neural networks \cite{zhou2020universality}. \\
Restricting to reservoir computing, both ESN \cite{grig2017,grig2018,gonon} and LSM \cite{mass.univ} have yielded classes of reservoir computers that serve as universal approximators of functionals with fading memory. Despite the similarity in the strategy exploited in the proofs of these results, ultimately based on an application of the Stone-Weierstrass theorem in the spirit of the seminal work of Boyd and Chua in 1985 \cite{boyd.chua}, they appeared as context-specific.\\
Emerging from these frameworks, as anticipated by Maass \cite{maass.motivation}, reservoir computing has proved its efficiency not restricted to a digital approach. In the last twenty years, numerous physical implementations of reservoir systems have been proposed as valid tools for reservoir computing \cite{survey, tanaka2019recent, markovic2020physics, Nakajima_2020, physi.skir}, exploiting for example, optical devices \cite{ph3,ph2, ph4,ph1}, spiking neurons and cortical networks \cite{DOCKENDORF200990, yada2021physical, spike, PENG2024274} and nanoscale oscillators \cite{torrejon2017neuromorphic, papp2021nanoscale}. More recently, the advent of quantum computation \cite{prati2017quantum,de2023silicon} has also shed light on the development of reservoir computers harnessing the high computational capacity of quantum reservoir systems. After the first proposal \cite{giappo.1}, many quantum systems have been proposed as reliable effective reservoirs \cite{ kutvonen2020optimizing, govia,  martinez2021dynamical, mujal2021opportunities, suzuki.quantum,  qrc, mujal.quantu, molteni2023optimization, ghosh2021reconstructing,garcia2023scalable, domingo2020taking} and even as examples of universal classes of reservoir computers \cite{chen1, chen2, nokkala2021gaussian}. In particular, quantum networks have been proven effective in processing physical quantum information and predicting quantum dynamics \cite{ghosh2019quantum, ghosh2021reconstructing, ghosh2021realising, lazzarin2022multi}.
As new embodiments of reservoir computing are emerging after the advent of quantum computing, a unified framework is needed in order to assign universality. This work aims to define such a generalized framework to determine which minimum set of properties is mandatory to be fulfilled by a family of reservoir computers to exhibit universality. A unified theoretical approach, that includes systems fed with inputs living in general normed space, may act as a guideline promoting the development of new implementations of reservoir systems. \\
Due to the large variety of reliable classes of reservoir computers, we propose a unified theoretical framework - that we call \emph{echo} reservoir computers - together with a ready-made setting that guarantees universality. We then show that echo state networks and liquid state machines fall within our unified framework and we discuss an example that entails a quantum reservoir \cite{naka.2019}. 
 \\
It results that the sufficient conditions for a class of reservoir computers to be universal are respectively that the set of associated functionals has continuity with respect to a fading metric and is a polynomial algebra. The fading metric accounts for the fading memory property by combining the sup-norm with a monotonic increasing function in $\r^-$, referred to as the fading function. In this perspective, we test such a unified framework, showing that it encompasses the two most notable embodiments of reservoir computers (ESN and LSM). Moreover, we prove that our framework is general, including an example concerning a gate-based quantum reservoir. In this context, spatial multiplexing, namely the parallel processing of information with disjoint registers and inter-depending readout functions, accounts for universality, as empirically recognized in \cite{naka.2019} and theoretically confirmed in \cite{chen1,chen2,nokkala2021gaussian}. \\
The rest of the paper is organized as follows: in Section 2 we present the theoretical framework, by defining a general notion of reservoir computer, stating the main general theorem on universality and discussing the notion of fading memory. In Section 3 we retrieve classical results on the universality of families of echo state networks and liquid state machines. In Section 4, we discuss the notion of quantum echo state network. Furthermore, we discuss an example of a universal class of quantum reservoir systems.
\section{Theoretical framework}\label{termino}
A reservoir computer is a computational architecture for the online processing of temporal data, either discrete or continuous in time. In both cases, we refer to them as orbits, using a terminology borrowed from dynamical systems theory. A reservoir computer produces a mapping between input and output orbits by the composition of a recurrent network and a readout function. We will refer to any mapping between orbits as a filter. In reservoir computing, at each temporal step, the current value of the input orbit is encoded in the dynamics of the recurrent network or the physical substrate that acts as a reservoir. Subsequently, the current state of the reservoir is processed by a readout function, usually depending on some nodes in the reservoir often referred to as the true nodes, that produces the value of the output orbit at the current time step.   
We remark that when a reservoir computer is used for learning a given task, thus for approximating a given filter, the sole trained component is the readout function, minimizing the distance between the mapping and the value of a given target filter.
\begin{figure}
\begin{center}
\includegraphics[scale = .38]{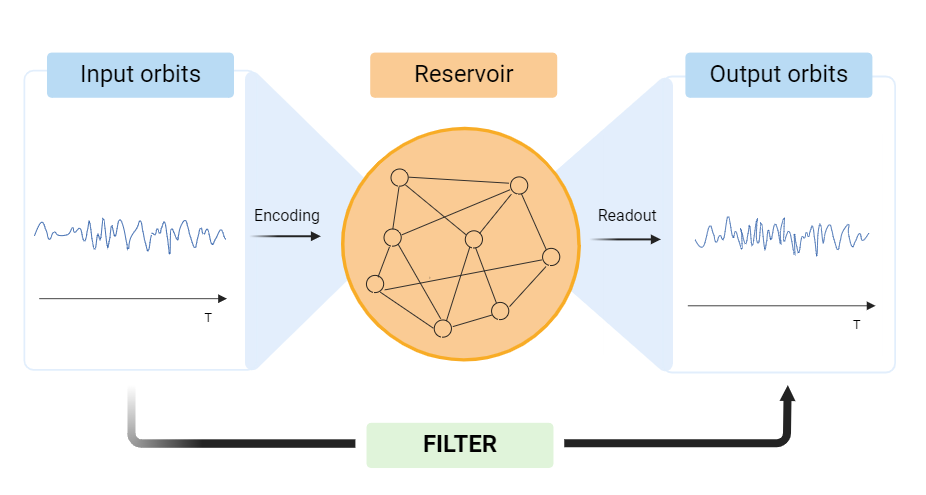}
\end{center}
\caption{\small{A schematic representation of a reservoir computer for one-dimensional orbits ($n=1$). An input orbit is encoded in the reservoir that acts as a hidden layer. The response of the reservoir is used to produce an output through a trainable readout mapping. Here T can be either $\r$ or $\z$.}}
\end{figure} 
\subsection{Definitions of the input data and the reservoir}
In this section, we introduce the definitions and some properties of filters and functionals. In particular, we carry out a notation suitable for both discrete and continuous time evolution of the reservoir systems. All throughout the paper, $T$ will indicate both $\r$ or $\z$. 
\begin{definition}
    Let $T = \r$ or $T = \z$. Given $\left(\ns,\norm{\cdot}\right)$ a normed space, we call \textbf{orbit} any function of the form 
    \begin{align*}
        u\colon T \rightarrow \ns.
    \end{align*}
    We denote with $\ns^T$ the space of orbits in $\ns$ labeled in $T$.
\end{definition}
We will consider only sets of uniformly bounded orbits. Namely, we define the following space.
\begin{definition}\label{bounded}
   Given a real $K>0$ and a normed space $\left(\ns,\norm{\cdot}\right)$, we denote with
    \begin{align*}
        S^{T, \ns}_{K} \coloneqq \left\{u\colon T \rightarrow\ns: \norm{u_t}\le K \quad \forall
        t\in T \right\}
    \end{align*}
    the set of uniformly bounded orbits on $\ns$. 
\end{definition}
Usually, classical reservoir computers take inputs belonging to subsets of $S^{T,\r^n}_{K}$. Examples of typical subsets that we will consider later on are given by continuous and Lipschitz functions, spike trains, or time series. These subsets are compact metric spaces if endowed with a proper distance. Moreover, the case of reservoir systems that take pure quantum states is included in our definition of orbits, taking $\ns$ as a complex Hilbert space. 
\begin{definition}
        Given $\tau\in T$, we denote with $u^\tau$ the time-delayed orbit defined by 
\begin{align*}
    (u^\tau)_t \coloneqq u_{t-\tau}\,.
\end{align*}
\end{definition}
We call filter any map between sets of orbits and functional any map that sends an orbit in a scalar value. Precisely, we have the following definitions. 
\begin{definition}
    We call \textbf{filter} any map of the form
    $B \colon\ns_1 ^T\rightarrow \ns_2^T$, for some normed spaces $\ns_1,\ns_2$.
   \begin{itemize}
        \item[i.] A filter is \textbf{causal} if for any given $\tau\in T$, $u_t = v_t \,\, \forall t\le\tau$ implies
    \begin{align*}
        (Bu)_\tau =  (Bv)_\tau \,;
    \end{align*}
    \item [ii.] a filter is \textbf{time invariant} if for any $\tau\in T$, 
    \begin{align*}
       \left(Bu^\tau\right)_t = \left(Bu\right)_{t-\tau},\qquad \forall t \in T\,.
    \end{align*}
    \item  [iii.] a filter is \textbf{bounded} if there exists $K'>0$ such that  for any $u\in S^{T,\ns_1}_{K}$,
    \begin{align*}
        \norm{(Bu)_t}< K',\, \forall t\in T\,.
    \end{align*}
    \end{itemize} 
   We call BCTI filter a filter that is bounded, causal and time-invariant.
\end{definition}
\begin{definition}
    Given two normed spaces $\ns_1,\ns_2$, a \textbf{functional} from $\ns_1$ to $\ns_2$ is any map of the form $H\colon \ns_1^T \rightarrow \ns_2$.
\end{definition}
It is well known that every BCTI filter is associated with a unique functional. Precisely, we have the following lemma. We recall the proof for the sake of completeness.
\begin{lemma}\label{bijection}
    There is a bijection between the set of functionals and the set of BCTI filters. 
\end{lemma}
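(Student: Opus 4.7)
The plan is to exhibit the bijection explicitly by an evaluation-at-zero trick together with its inverse shift construction. Given a BCTI filter $B$, I would define the associated functional
\[
H_B(u) := (Bu)_0,
\]
which simply samples the filter output at time zero. In the opposite direction, given a functional $H$, I would define the associated filter by
\[
(B_H u)_t := H(u^{-t}),
\]
using the time-delayed orbit $u^{-t}$ whose value at time $s$ is $u_{s+t}$, so that reading the filter at time $t$ corresponds to reading the functional after the history up to $t$ has been shifted into the history up to $0$.

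The first step is to check that these assignments are mutually inverse. Since $u^0 = u$, one has immediately $H_{B_H}(u) = (B_H u)_0 = H(u^0) = H(u)$. In the other direction, $(B_{H_B} u)_t = H_B(u^{-t}) = (B u^{-t})_0$, and applying time-invariance of $B$ with $\tau = -t$ yields $(B u^{-t})_0 = (Bu)_{0-(-t)} = (Bu)_t$, so $B_{H_B} = B$.

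The second step is to verify that $B_H$ actually satisfies the BCTI properties, which is where the matching between the two sides of the bijection is pinned down. Time-invariance rests on the shift identity $(u^\tau)^{-t} = u^{\tau-t}$, obtained directly from the definitions, which gives $(B_H u^\tau)_t = H(u^{\tau-t}) = (B_H u)_{t-\tau}$. Boundedness follows because $u^{-t}$ remains in $S^T_{n,K}$ whenever $u$ does, so any bound on $H$ over $S^T_{n,K}$ transfers to a uniform bound on $(B_H u)_t$. Causality follows because $(B_H u)_t$ depends on $u$ only through the values of $u^{-t}$ at nonpositive times, i.e.\ through $\{u_r : r \le t\}$, so if $u_r = v_r$ for all $r \le t$ then $(B_H u)_t = (B_H v)_t$.

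I do not anticipate a serious obstacle here: the whole argument is bookkeeping with the shift operation $u\mapsto u^\tau$. The one place requiring care is keeping the signs of the shifts consistent, so that the evaluation time $t$ on the filter side correctly matches the argument shift $u^{-t}$ on the functional side, and so that the implicit ``causal'' and ``bounded'' conditions on functionals (depending only on past values, and being bounded on $S^T_{n,K}$) correspond precisely to the corresponding properties of filters under the bijection.
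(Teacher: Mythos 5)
Your proof is correct and follows essentially the same route as the paper's: the evaluation-at-zero map $H_B(u)=(Bu)_0$ paired with the shift construction $(B_H u)_t = H(u^{-t})$, with time-invariance used to check the two compositions are the identity. You are in fact slightly more thorough than the paper, which omits the explicit verification that $B_H$ is bounded, causal and time-invariant.
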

\begin{proof}
    Given a BCTI filter $B$, the associated functional is given by its value in $t=0$, namely $H_B(u) \coloneqq (Bu)_0$. Conversely, given a functional $H$, the associated filter is $B_H$ defined via the time-delayed orbit,  $(B_H u)_t \coloneqq H(u^{-t})$. Moreover, it is straightforward to verify that $ B \equiv B_{H_B} $ and $H \equiv H_{B_H}$. In fact, we have the equalities
  \begin{equation*}
      \left(B_{H_B} u\right)_t = H_B(u^{-t}) = \left(B u^{-t}\right)_0 = \left(B u\right)_t
  \end{equation*}
  since $B$ is time-invariant and 
  \begin{equation*}
  H_{B_H} (u) = \left(B_H u\right)_0 = H (u^0) = H(u)\,.
  \end{equation*}
\end{proof}
 As a consequence of the latter lemma, one is allowed to consider only functionals, which are simpler mathematical objects with respect to filters. In particular, it allows the application of Stone - Weierstrass theorem for filters and functionals with target in $\r$.
 \begin{remark}
We will refer to filters and functionals with $\ns=\r$ as target normed space as real filters and real functionals, respectively.
\end{remark}
\subsection{Universal echo reservoir computers}
Now we turn to the concept of echo reservoir computer, a general machine which unifies echo state networks and liquid state machines by a single class. The name reminds that such a system associates a unique output to any given input orbit. In the context of echo state networks, such property is often referred to as \textit{echo state property}. Later on, we show that, under proper conditions, we can write both liquid state machines and families of echo state networks as echo reservoir computers. 
\begin{definition}
    Let $\ns_1, \ns_2$ be two normed spaces. Let $E \colon \ns_1^T \rightarrow\ns_2^T$ be a given BCTI filter and $H\colon\ns_2\rightarrow\r$ a given real function. We call \textbf{echo reservoir computer} the system defined by
    \begin{align*}
    \begin{cases}
        x_t = \left(Eu\right)_t\\
        y_t = H(x_t),\qquad \forall t \in T\,.
    \end{cases}
    \end{align*}
   We denote this echo reservoir computer as $\cC = \cC(E,H)$.
\end{definition}
Usually, one refers to $x_t \in \ns_2 $ as the reservoir state at time $t$ and to $H$ as the readout function. Generally, one defines more general reservoir computers, without requiring causality and time-invariance. Here, we are assuming that echo reservoir computers are causal and time-invariant by definition since all the universality results pertain to systems with these properties. Moreover, we remark that, according to the above definition, any echo reservoir computer associates a unique output to any input. In the literature, such property is known as the \textit{echo state property}. The notion of a echo reservoir computer fulfills a standard compactness condition since we are assuming $E$ to be a bounded filter. Namely, if $u\in S^{T,\ns_1}_{K}$, then $x\in S^{T,\ns_2}_{K'} $ for some $K'>0$. \\ Any echo reservoir computer naturally defines a real BCTI filter $R_{(E,H)}$, that we call the reservoir filter associated with the echo reservoir computer $\cC(E,H)$, defined by 
\begin{align*}
   &y = R_{(E,H)}u\,,\\ &y_t = \left(R_{(E,H)}u\right)_t = h\left((Eu)_t\right)\quad \forall t\in G\,.
\end{align*}
Then, recalling that any BCTI filter determines a unique functional, we can associate to any echo reservoir computer a real functional, that for simplicity we still denote $R_{(E,H)}$.
\\ \\
In the following, we prove a general theorem about the density of real continuous functionals on compact metric spaces and we deduce from that the universality of families of echo reservoir computers. Later on, we will show that fading memory is the suitable continuity property to assure compactness and thus universality. Since we restrict on causal, time-invariant filters, we can consider orbits restricted to $\z_-$ and $\r_-$. 
\begin{definition}
    Denote with $T_-$ either $ \z_-$ or $\r_-$. We call \textit{restricted} orbit any function of the form
    $ u \colon T_- \rightarrow \r^n$.
    Accordingly to Def. \ref{bounded}, we denote with $S^{T,\ns}_{K,-}$ the set of all the bounded orbits and with $I_- \subset S^{T,\ns}_{K,-}$ any subset of inputs.
\end{definition}
\begin{remark}
The unique functional associated with a BCTI filter is uniquely determined by its action on restricted orbits. 
\end{remark}
A natural necessary condition for the universality of a class of functionals is to discriminate different inputs. 
\begin{definition}
    A family of functionals $\cH$ separates orbits in $I_-$ if for any pair of orbits $u,v\in I_-$ with $u\not=v$ there exists a functional $H\in \cH$ such that $H u \not = H v$.
\end{definition}
Before stating the main theorem, we need to specify the polynomial operations between functionals. More explicitly, it consists of defining sums and products of functionals. For simplicity, we restrict on real functionals, although these definitions can be generalized to functionals between arbitrary normed spaces.
\begin{definition}
 Given $u \in S^{T,\ns}_{K}$, 
 two real functionals $H_1, H_2$ and $\lambda \in \r$, we denote 
\begin{equation}\label{poly}
\begin{aligned}
    \left(H_1 + \lambda H_2\right) \coloneqq H_1 u + \lambda H_2 u \\
    \left(H_1 \cdot H_2\right)u \coloneqq  H_1 u \cdot H_2 u\,.
\end{aligned}
\end{equation}   
\end{definition}
Then, a polynomial function of some functionals $\left\{H_i\right\}_{i=1,\dots,l}$ is defined accordingly to Eq.\eqref{poly}. Precisely, we recall the notion of a polynomial algebra.
\begin{definition}
A family of real functionals $\cH$ is a \textbf{polynomial algebra} if for any $R_1,R_2 \in \cH$ and any $\lambda \in \r$ there exist $R^\lambda_{+}, R_{\cross} \in \cH$ such that
    \begin{equation*}
        R_1 + \lambda R_2 = R_+^\lambda\,,\quad R_1 \cdot R_2 = R_{\cross}\,,
    \end{equation*}
\end{definition}
We are in the position to state and prove our main theorem, which points out some sufficient conditions that imply the density of a family of real functionals.
\begin{lemma}\label{func.dens}
Let $ I_- \subset S^{T,\ns}_{K}$ be a subset of restricted orbits and assume that $\left(I_-, d\right)$ is a compact metric space. Let $\cH$ be a family of real functionals such that
\begin{itemize}
    \item separates orbits in $I_-$;
    \item contains the constant functionals;
    \item any functional $R\in \cH$ is continuous with respect to the metric $d$\,.
    \end{itemize}
    Then any functional $H\colon I_-\rightarrow \r$ continuous with respect to the metric $d$ is approximated with arbitrary precision by a polynomial function of some functionals in $\cH$. Precisely, for any $\epsilon>0$ there exist $\left\{R_i\right\}_{i=1,\dots,l}$ with $ R_i \in \cH$ and a polynomial $p$ such that $\left|p(R_1,\dots,R_l)u - H u\right| < \epsilon$ for any $u\in I_-\,$.
\end{lemma}
The proof is a straightforward application of the Stone-Weierstrass theorem, which states that, 
given $(W,d)$ a compact metric space and $S$ a subset of real continuous functions, $S \subset C(W,\r)$, if $S$ separates points and contains constant functions, then for any $\epsilon>0$ and any $f\in C(W,\r)$, there exist $\left\{s_i\right\}_{1,\dots,l},\, s_i \in S$ and a polynomial $p$ such that $|p(s_1,\dots,s_l)(x) - f(x)|< \epsilon$ for any $x \in W$.
\begin{proof}
By hypothesis, $\cH$ is a subset of the set of the real continuous functions on $\left(I, d_\omega\right)$, which is a compact metric space by hypothesis.
Moreover, $\cH$ separates points in $U$ and contains constant functions. Then, we apply the Stone-Weierstrass theorem and we get the thesis.
\end{proof}
The latter lemma does not provide alone the universality of a family of echo reservoir computers $\cR$, since it may be possible that the functional $p(R_1,\dots,R_l)$ is not associated with any of the elements in $\cR$. If the family of functional associated with $\cR$ is a polynomial algebra, then the conditions in Theorem \ref{func.dens} ensure that $\cR$ is universal.
\begin{theorem}[Universality]\label{univer}
Let $\cR$ be a family of echo reservoir computers and let $I$ be a set of input orbits. Assume that $(I_-,d)$ is a compact metric space. Assume that the set of functionals $ \cH_{\cR}$ associated with $\cR$ is a polynomial algebra that separates points, contains constant functionals, and such that each of its elements is continuous with respect to the metric $d$. \\ 
Then $\cR$ is a universal class of echo reservoir computers with respect to the inputs $I$. Namely, for any $\epsilon>0$, for any real, bounded functional $H\colon I\rightarrow \r$ continuous with respect to $d$, there exists a echo reservoir computer $\cC(E,h) \in \cR$ such that
    \begin{align*}
        \left|R_{(E,h)} u - H u\right| < \epsilon, \quad \forall u \in I,.
    \end{align*} 
    \end{theorem}
    \begin{proof}
        Let $H\colon I\rightarrow\r$ be a real bounded functional, continuous with respect to $d$. Then, from Lemma \ref{func.dens}, there exist some functionals $R_1,\dots,R_l \in \cH_{\cR}$ and a polynomial $p$ such that $|p(R_1,\dots,R_l) u - H|< \epsilon$ for any $u\in I_-$. Since $\cH_{\cR}$ is a polynomial algebra, then $p(R_1,\dots,R_l) \in \cH_{\cR}$; this means that there exists an echo reservoir computer $\cC(E,h)$ such that the functional associated to it is $p(R_1,\dots,R_l)$, namely $R_{(E,h)} = p(R_1,\dots,R_l)$.
    \end{proof}
\begin{remark}
    Recalling the one-to-one correspondence between BCTI filters and functionals, one can restate the universality property as the universal approximation of BCTI real filters.
\end{remark}
    \begin{remark}
        In the latter theorem, we ask the family of functionals associated with a given class of reservoir computers to be a polynomial algebra. In many examples, one exploits polynomial readouts to ensure this condition is verified, although is not necessary. See Refs. \cite{grig2017, grig2018} for examples of universal classes of reservoir computers with linear readout.
    \end{remark}
    \subsection{Continuity and fading memory}
 In the context of recurrent neural networks, the concept of fading memory plays a key role \cite{boyd.chua, sand.fad.mem}. Informally, a functional, or a filter, has fading memory if its dependence on older information is reduced over time. In the former subsection, we considered the sufficient conditions for the approximation of continuous functionals defined on inputs belonging to compact metric space. In this section, we show that fading memory provides suitable continuity to be assured by filters and functionals. We specify its formal definition for two different sets of input orbits, namely time series and spike train.
    \subsubsection*{\textbf{Time series}}
    We consider first time series, thus bounded orbits labeled in $T = \z$, namely $I_-^K = S^{\z,\ns}_{K}$. Fading memory ensures the compactness of the associated metric space.
    \begin{definition}\label{fad.ts}
        For any given $\omega\colon T_-\rightarrow(0,1]$ an increasing function with zero limit at infinity and $\omega(0)=1$, that we will call fading function, we define the \textbf{fading metric}
        \begin{align*}
        d_\omega(u,v)\coloneqq  \sup_{t\in T_-}\norm{u_t-v_t}\omega_{t}, \quad \forall u,v \in I_-^K \,.
    \end{align*}
    \end{definition}
    We have the following lemma.
    \begin{lemma}\label{metric.space}
    If the normed space $\ns$ is finite-dimensional the space $I_-^K$ equipped with the distance $d_\omega$ is a compact metric space. 
\end{lemma}
\begin{proof}
    By definition $d_\omega(u,u) = 0$, $d_\omega(u,v)>0$ for any $u\not=v$ and $d_\omega(u,v) = d_\omega(u,v)$. Triangular inequality follows since for any $u,v,z$
\begin{equation*}
    \begin{aligned}
        d_\omega(u,z) = \sup_{t\in T_-} \norm{u_t - z_t}\omega_t =\\= \sup_{t\in T_-} \norm{u_t - z_t + v_t - v_t}\omega_t \le \\
        \le \sup_{t\in T_-}\left(\norm{u_t - v_t} + \norm{v_t - z_t}\right) \omega_t \le \\ \le \sup_{t\in T_-} \norm{u_t - v_t}\omega_t + \sup_{t\in T_-} \norm{v_t - z_t}\omega_t = \\
        =   d_\omega(u,v) +  d_\omega(v,z)
    \end{aligned}
\end{equation*}
We prove that $\left(I_-^K,d_\omega\right)$ is compact by proving that one can extract a convergent subsequence from any sequence in it. \\
For convenience, for any $m\in\n$ and any $u=(u_{-\infty},\dots,u_{-1},u_0)\in I_-^K$ we denote with $u^m = (u_{-m},\dots,u_{-1},u_0)$ the time-restricted orbit and with $I_{-,m}^K$ the space of these time-restricted orbits. Then, any $I_{-,m}^K$ is compact, and thus sequentially compact, since it is bounded and closed in the finite-dimensional normed space $\ns^m$. \\ 
Given $\left\{u(n)\right\}_{n\in\n}$ a sequence in $ I_-^K$, we extract from it a subsequence $u(n_j)$ such that $d_\omega(u(n_j),\tilde{u})\rightarrow0$ for a certain $\tilde{u}\in I_{-}^K$. For any fixed $k\in\n$, $u^k(n)$ is a sequence in  $I_{-,k}^K$. Thus, since $I_{-,k}^K$ is sequentially compact, there exists a sequence $n_k$ such that 
\begin{equation*}
    \sup_{t\in\{-k,\dots,0\}}\norm{u_t^k(n_k)-\tilde{u}_t^k}\rightarrow 0, \quad n_k\rightarrow +\infty \,.
\end{equation*}
Reasoning as above for any $k\in \n$ and observing that one can choose any sequence $n_k$ such that $n_{k_2}$ is contained in $n_{k_1}$ for any $k_2>k_1$, we note that $\tilde{u}^{k_2}$ is the extension of $\tilde{u}^{k_1}$. Thus $\tilde{u}$ is a unique, well-defined element in $I_-^K$, such that there exists a sequence $(n_j)_{j\in\n}$ for which, for any $k_0\in\n$
\begin{align}\label{conv.compact}
\sup_{t\in\{-k_0,\dots,0\}}\norm{\tilde{u}_t - u_t(n_j)} \rightarrow 0, \quad j\rightarrow +\infty\,.
\end{align}
We come to prove that $d_\omega\left(u(n_j),\tilde{u}\right)\rightarrow 0 $ as $j\rightarrow +\infty$. Fix $\epsilon>0$. By definition of fading function $\omega$, there exists $T_0$ big enough such that for any $t\ge T_0$ 
\begin{equation}\label{tempi.lontani}
\begin{aligned}
    \sup_{t\le -T_0}\norm{u_t(n_j) - \tilde{u}_t}\omega_{-t}\le 2 K \omega_{T_0} \le \epsilon 
\end{aligned}
\end{equation}
for any $j\in \n$. Moreover, from \eqref{conv.compact}, we have that there exists $J\in\n$ such that for any $j\ge J$
\begin{equation}\label{tempi.vicini}
\begin{aligned}
\sup_{t\in\{-T_0,\dots,0\}}\norm{u_t(n_j) - \tilde{u}_t}\omega_{-t} \le \\
\sup_{t\in\{-T_0,\dots,0\}}\norm{u_t(n_j) - \tilde{u}_t} \le \epsilon \,.
\end{aligned}
\end{equation}
Thus, combining \eqref{tempi.lontani} and \eqref{tempi.vicini}, we have proven that $d_\omega\left(u(n_j),\tilde{u}\right)\rightarrow 0 $ as $j\rightarrow +\infty$.
\end{proof}
The fading memory condition is the continuity in the topology of this compact metric space.
    \begin{definition}
        A functional $H$ taking inputs in $I^K$ fulfills the fading memory condition if it is continuous in the topology of the metric space $\left(I^K_-, d_\omega\right)$.
    \end{definition}
     \begin{definition}
     A BCTI $B$ fulfills the fading memory condition if the unique associated functional $H_B$ has fading memory.   
    \end{definition}
    \begin{remark}
Fading memory is a continuity property, thus it is preserved by any continuous operation. In particular, fading memory is preserved by sums and products in a polynomial algebra.
\end{remark}
    
\subsubsection*{\textbf{Spike train inputs}}
A spike train is modeled by a sequence of spiking activities in time, that is a discrete subset $u\subset\r$. In particular, denoting the neural refractory period with $\Delta>0$, one can assume that each couple of elements in $u$ have a distance at least $\Delta$. Equivalently, one may describe a spike train $u$ by a function, that we still denote with $u\colon\r\rightarrow\{0,1\}$, defined by
\begin{align*}
    u_t = \begin{cases}
        0 \quad t \not\in u \\
        1 \quad t \in u\,.
    \end{cases}
\end{align*}
Precisely, we consider the following set of spike train inputs
    \begin{align*}
        I^\Delta= \{u\colon \r \rightarrow \left\{0,1 \right\} \colon u_t + u_s < 2, \\ \forall \,|t - s|\le \Delta\}\,.
    \end{align*}
    and the related set of restricted orbits $I^\Delta_-$.\\
    Given any fading function $\omega$, one can construct a distance $d^c_\omega$ for which fading memory functionals defined on $(I^\Delta, d^c_\omega)$ are continuous \cite{mass.univ}. To do this, we approximate any $u\in I^\Delta$ with a continuous function $f^u$. Explicitly, one can define such a function by
    \begin{align*}
        f^u \coloneqq \sum_{s\in u} T^s
    \end{align*}
    with $T^s$ a positive continuous function such that $\left(T^s\right)_s = 1$ and $\left(T^s\right)_t$ if $|t-s|\ge 1$. Then one can define a distance $d^c_\omega$ so that $\left(I^\Delta_-,d^c_\omega\right)$ is a compact metric space. 
\begin{definition}
    For any $u,v\in I^\Delta_-$ and any given fading function $\omega$, we define $$d^c_\omega(u,v)\coloneqq \int_{-\infty}^0 \left|\left(f^u\right)_t - \left(f^v\right)_t\right|\omega_{t}\, dt\quad.$$
\end{definition}
The following result is proved in Ref. \cite{mass.univ}.
\begin{lemma}\label{fm-spike}
    The space $I^\Delta_-$ equipped with the distance $d^c_\omega$ is a compact metric space.
\end{lemma}
Again, the fading memory is the continuity in this topology.
\begin{definition}
        A functional $H$ taking inputs in $I^\Delta$ fulfills the fading memory condition if it is continuous in the topology of the metric space $\left(I^K_-, d^c_\omega\right)$.
    \end{definition}
\section{Universality of classical reservoir computers}
The definitions and the theorem above are exploited in the following, as they provide the minimal key ingredients for constructing classes of universal reservoir computers. First, the two classical paradigms of reservoir computing, namely liquid state machine and echo state networks \cite{mass.univ, grig2017} are considered, and next the unified paradigm is applied to an example of a quantum reservoir \cite{naka.2019}. 
\subsection{Universality of classical echo state networks}
Fix $T = \z$ and denote with $I^K = S^{\z,\ns_1}_{K}$ the set of uniformly bounded orbits on the normed space $\ns_1$. An echo state network (ESN) is a system described by the following equations
\begin{equation}\label{esn}
    \begin{cases}
        x_{t+1} = F(x_n, u_{t+1}) \\
        y_{t+1} = h(x_{t+1}),\qquad \forall t\in \z
    \end{cases}
\end{equation}
with $F\colon\ns_2\cross\ns_1\rightarrow\ns_2$ and $h\colon\ns \rightarrow\r$, where $\ns_2$ denotes the normed space of the reservoir states. 
\begin{definition}\label{esp}
     An echo state network has the \textbf{echo state property} if, for any $u\in I^K$, there exists a unique $x \in (\ns_2)^\z$ that solves the equations \eqref{esn}.
\end{definition}
An ESN with echo state property is included in the definition of an echo reservoir computer.
\begin{lemma}\label{esn.is.erc}
    An ESN with echo state property is an echo reservoir computer. Namely, there exists a causal, time-invariant filter $E$ such that Eq. \eqref{esn} can be written as
    \begin{align*}
        \begin{cases}
        x_{t+1} = (E u)_{t+1} \\
        y_{t+1} = h(x_{t+1})
    \end{cases}
    \end{align*}
    In particular, it follows that any ESN with echo state property defines a unique functional $R_{(E,h)}$.
\end{lemma}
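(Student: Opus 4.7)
The plan is to extract from the echo state property a well-defined assignment $u \mapsto x$ and then verify that this assignment is a BCTI filter, at which point the second equation of \eqref{esn} is already in the echo reservoir computer form. Concretely, for each $u \in I_K$ the ESP grants a unique $x \in (\r^N)^\z$ satisfying $x_{t+1} = F(x_t, u_{t+1})$ for every $t \in \z$, so I would define $E$ componentwise by $(Eu)_t := x_t$. Causality and time-invariance of $E$ should then both drop out of the uniqueness half of ESP together with the autonomous form of $F$.

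For causality, fix $u,v \in I_K$ with $u_s = v_s$ for all $s \le \tau$, and set $x = Eu$, $y = Ev$. The key step is a gluing argument: define $\tilde{y}_t := x_t$ for $t \le \tau$ and propagate forward via $\tilde{y}_{t+1} := F(\tilde{y}_t, v_{t+1})$ for $t \ge \tau$. Because $u$ and $v$ coincide on $(-\infty,\tau]$, the recursion $\tilde{y}_{t+1} = F(\tilde{y}_t, v_{t+1})$ is automatically satisfied for $t < \tau$ (it collapses to the recursion for $x$ driven by $u$), and by construction it holds for $t \ge \tau$ as well. Thus $\tilde{y}$ is a full trajectory for input $v$, and ESP forces $\tilde{y} = y$, giving $(Eu)_t = (Ev)_t$ for all $t \le \tau$. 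For time-invariance I would check directly that $x^\tau$ solves the recursion driven by $u^\tau$, since $(x^\tau)_{t+1} = x_{t+1-\tau} = F(x_{t-\tau}, u_{t+1-\tau}) = F((x^\tau)_t, (u^\tau)_{t+1})$, and uniqueness then yields $E(u^\tau) = (Eu)^\tau$, i.e. $(Eu^\tau)_t = (Eu)_{t-\tau}$.

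For boundedness, and hence for $E$ being BCTI componentwise, the plan is to appeal to the standard hypothesis that $F$ admits a forward-invariant bounded set, e.g. that $F$ maps $\bar{B}(0,M) \times \bar{B}(0,K) \to \bar{B}(0,M)$ for some $M > 0$, so that the unique trajectory selected by ESP lies in $\bar{B}(0,M)$ for every $t$. With $E$ then a BCTI multidimensional filter, the pair $\cC(E,h)$ matches the definition of an echo reservoir computer, and the associated functional $R_{(E,h)}$ is obtained by composing $h$ with $(Eu)_0$ and invoking the bijection of Lemma \ref{bijection}.

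The main obstacle is this last step: ESP as stated is purely an existence-and-uniqueness condition and does not by itself produce the uniform bound on $x$ required by the BCTI definition. Any rigorous write-up must therefore either import an extra invariance assumption on $F$ or fold it into the convention of ESP adopted here. Causality and time-invariance, by contrast, follow mechanically from uniqueness and from the fact that $F$ does not depend explicitly on time, so those steps are essentially formal.
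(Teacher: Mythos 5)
Your proposal is correct and follows essentially the same route as the paper: define $E$ by selecting the unique trajectory granted by the echo state property, then derive time-invariance by checking that the shifted state solves the recursion driven by the shifted input and invoking uniqueness. You are in fact more complete than the paper on two points: the paper dismisses causality as holding ``by construction'' whereas your gluing-plus-uniqueness argument actually proves it, and you correctly flag that ESP alone does not yield the boundedness required for $E$ to be a BCTI filter (and hence for the functional $R_{(E,h)}$ to exist via Lemma~\ref{bijection}), a gap the paper leaves unaddressed.
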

\begin{proof}
The existence and uniqueness of $E$ are a straightforward consequence of the echo state property. Moreover, $E$ is causal by construction. Eventually, the uniqueness of reservoir induces  time-invariance, namely $(E z)_{t+\tau} = (E z^\tau)_n$ for any $t \in \z$ and fixed $\tau$. Formally, let $\tilde{x} = E z^\tau$ the reservoir state associated with the time-delayed orbit $z\tau$. By definition, it solves the equation
\begin{equation*}
    \tilde{x}_{t+1} = F(\tilde{x}_{t}, u^\tau_{t+1}) = F(\tilde{x}_{t}, u_{t+1-\tau})
    \end{equation*}
Since by definition $x_{t+1-\tau} = F(x_{t-\tau}, u_{t+1-\tau})$ then, by uniqueness of the solution of Eq. \eqref{esn}, one has $\tilde{x}_{t} = x_{t-\tau}$ for any $t\in \z$. This implies $(E z)_{t+\tau} = (E z^\tau)_t$.
\end{proof}
In Ref. \cite{grig2018} it is identified a ready-to-use condition that assures the echo state property of an ESN and the fading memory of the associated functional. We report this lemma for completeness since we use it later on in our developments.
\begin{lemma}\label{contraction}
   If the reservoir function $F$ is a contraction, namely
\begin{align*}
     \norm{F(x, u) - F(y,u)}\le r \norm{x-y}
\end{align*}
for some $r\in(0,1)$, then the associated ESN has the echo state property. Moreover, there exists a unique functional with fading memory associated with it. 
\end{lemma}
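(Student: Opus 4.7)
The plan is to split the lemma into its two claims and handle them separately, both hinging on the contraction hypothesis.

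For the echo state property I would invoke Banach's fixed point theorem. Set $C \coloneqq \sup\{\|F(0,u)\| : \|u\|\le K\}$, which is finite since $F$ is continuous on the compact $K$-ball, and fix $M \ge C/(1-r)$. On the closed ball $B_M \subset \ell^\infty(\z,\r^N)$ of radius $M$, a complete metric space under the sup-norm, define the operator $(T_u x)_t \coloneqq F(x_{t-1}, u_t)$. The contraction hypothesis yields $\|F(x,u)\| \le r\|x\|+C$, so $T_u$ sends $B_M$ into itself, and $\|T_u x - T_u y\|_\infty \le r\|x-y\|_\infty$. Banach provides a unique fixed point, which is the unique bounded orbit solving \eqref{esn}; this is exactly the echo state property. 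The map $u\mapsto x$ is then causal, time-invariant and bounded by the previous lemma.

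For the fading memory, fix a fading function $\omega$ and consider two inputs $u,v\in I_K$ with associated orbits $x,y$. Subtracting the reservoir equations and adding and subtracting $F(y_{t-1},u_t)$ gives
\begin{equation*}
\|x_t - y_t\| \le r\|x_{t-1}-y_{t-1}\| + \phi(\|u_t-v_t\|),
\end{equation*}
where $\phi$ is the (non-decreasing) modulus of continuity of $F$ in its second argument on the compact set $B_M \times \{\|u\|\le K\}$. Iterating $N$ steps and bounding the boundary term by $\|x_{-N}-y_{-N}\|\le 2M$,
\begin{equation*}
\|x_0-y_0\| \le 2Mr^N + \sum_{k=0}^{N-1} r^k\,\phi(\|u_{-k}-v_{-k}\|).
\end{equation*}
If $d_\omega(u,v)<\epsilon$, then $\|u_{-k}-v_{-k}\|\le \epsilon/\omega_{-k}$, and by monotonicity of $\omega$ this is $\le \epsilon/\omega_{-N}$ for every $k\le N-1$, whence
\begin{equation*}
\|x_0-y_0\| \le 2Mr^N + \frac{\phi(\epsilon/\omega_{-N})}{1-r}.
\end{equation*}
Choosing $N$ large enough to kill the first term to any given precision and then $\epsilon$ small enough to kill the second yields continuity of $u \mapsto x_0$ in the metric $d_\omega$, and hence of the functional $u \mapsto h(x_0)$ as soon as the readout $h$ is continuous. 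Uniqueness of the associated functional was already established via the bijection in Lemma \ref{bijection} together with the previous lemma.

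The main obstacle sits in the fading memory step, not in the fixed point argument. Because contractivity is imposed only in the first variable, iterating the recursion drags in the $\phi(\|u_t-v_t\|)$ terms carrying merely a modulus of continuity, so one cannot simply telescope to a geometric series in $d_\omega(u,v)$. The two-step splitting above — using the factor $r^N$ to tame the remote past before using smallness in $\omega$ to tame the head of the sum — is what makes the estimate robust: it works uniformly for every admissible fading function $\omega$, so no pairing between $\omega$ and the contraction rate $r$ is required.
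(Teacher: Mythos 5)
The paper does not actually prove this lemma: it is imported verbatim from Ref.~\cite{grig2018} ``for completeness,'' so there is no in-paper argument to compare against. Your proof is correct and is essentially the standard one from that reference: Banach's fixed point theorem on a closed ball of $\ell^\infty(\z,\r^N)$ for existence and uniqueness of the state orbit (hence the echo state property), and for fading memory the split of the backward iteration into a remote-past term killed by $r^N$ and a recent-past sum controlled by the modulus of continuity in the input argument, with the quantifiers ordered as ``first $N$, then $\epsilon$.'' That ordering is exactly right and is what makes the argument work for an arbitrary fading function $\omega$ without any compatibility condition between $\omega$ and $r$.

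Two small caveats, neither fatal. First, your argument silently uses continuity of $F$ in its second argument (to get $C<\infty$ and the uniform modulus $\phi$ on the compact set $B_M\times\{\norm{u}\le K\}$); the lemma as stated in the paper assumes only the contraction in the first variable, so you should state this hypothesis explicitly --- it is assumed in \cite{grig2018} and is harmless, but it is an addition to the statement. Second, Banach gives uniqueness only within $B_M$, whereas the paper's definition of the echo state property asks for uniqueness among all $x\in(\r^N)^\z$; the gap is closed by one extra line showing that any bounded solution satisfies $\norm{x_t}\le r^N\sup_s\norm{x_s}+C/(1-r)$ and hence lies in $B_M$ (uniqueness among genuinely unbounded orbits cannot be expected and is not what the literature means by the echo state property). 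With those two remarks added, the proof is complete.
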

We consider now a specific echo state network architecture and we show that the class of functionals associated with this class is a polynomial algebra, thus we show that this class is universal. We consider echo state networks of the form
\begin{align}\label{esn.ese}
        \begin{cases}
        x_{t+1} = \sigma\left(A x_t + B u_{t+1} + \xi \right)\\
        y_{t+1} = p(x_{t+1})
    \end{cases}
    \end{align}
    with $\sigma$ defined by the component-wise application of any squashing function (i.e the hyperbolic tangent), $A$ a $N\cross N$ matrix for some $N\in\n$ - the dimension of the reservoir -, $B$ a $N\cross n$ matrix and $\xi \in \r^N$. Here $n$ and $N$ are respectively the dimension of the input orbit and the dimension of the reservoir. We denote with $H^{A,B}_p$ the reservoir functional associated with an instance of the systems described by Eq. \eqref{esn.ese}. The echo state property and the fading memory of the associated functional follows depending on the spectral properties of $A$. Namely, it holds the following.
    \begin{lemma}
        Let $\sigma_{max}(A)$ the maximum singular value of $A$ and $L$ the Lipschitz constant of $\sigma$. If $|L \cdot\sigma_{max}(A)|< 1$, then the ESN in \eqref{esn.ese} has the echo state property and the associated functional has fading memory.
    \end{lemma}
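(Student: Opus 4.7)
The goal is to verify the hypothesis of Lemma \ref{contraction}, namely that the reservoir map $F(x,u) = \sigma(Ax + Bu + \xi)$ is a contraction in $x$, uniformly in the input $u$. Once this is secured, the echo state property together with the existence of a fading-memory functional associated with \eqref{esn.ese} follow at once from the cited lemma, and no further argument is needed.

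For the contraction estimate, the first move is to exploit the component-wise nature of $\sigma$: writing $\sigma(a)-\sigma(b) = D_{a,b}(a-b)$ with $D_{a,b}$ a diagonal matrix whose entries (convex combinations of slopes of $\sigma$) are bounded in modulus by $L$, one obtains
\[
\|F(x,u) - F(y,u)\| \;\le\; L\,\|A(x-y)\| \;\le\; L\,\|A\|_{\mathrm{op}}\,\|x-y\|.
\]
The assumption is phrased in terms of $\rho(A)$ rather than the Euclidean operator norm, so the sharp constant has to be extracted via Gelfand's formula or, equivalently, a Schur triangularization followed by a diagonal rescaling: for every $\varepsilon>0$ there exists an equivalent inner-product norm $\|\cdot\|_*$ on $\r^N$ such that $\|A\|_* \le \rho(A)+\varepsilon$. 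Choosing $\varepsilon$ small enough that $L(\rho(A)+\varepsilon) < 1$ produces a candidate contraction constant $r \in (0,1)$.

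The main technical obstacle is that the change of norm which squeezes $\|A\|_*$ close to $\rho(A)$ can simultaneously inflate the Lipschitz constant of $\sigma$, since $\sigma$ acts component-wise and does not in general interact well with an arbitrary positive-definite reweighting. One handles this either by selecting a weighted norm $\|v\|_* = \|Pv\|_2$ and carrying the condition-number factor $\kappa(P)$ through the estimate $\|F(x,u)-F(y,u)\|_* \le L\,\kappa(P)\,\|A\|_*\,\|x-y\|_*$, or by passing to an $m$-fold composition of $F$ and using $\|A^m\|^{1/m}\to \rho(A)$ to absorb the norm-dependent constants into the contractive decay. Either route certifies that $F$ is a strict contraction in a norm equivalent to the Euclidean one; invoking Lemma \ref{contraction} together with the equivalence of norms on $\r^N$ then yields both the echo state property and the fading memory of the associated functional, which is the statement to be proved.
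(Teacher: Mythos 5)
Your reduction to Lemma \ref{contraction} is exactly the paper's first (and essentially only) step, and your first display, $\|F(x,u)-F(y,u)\|\le L\,\|A\|_{\mathrm{op}}\,\|x-y\|$, is the honest version of the key estimate: it proves the lemma under the hypothesis $L\|A\|_{\mathrm{op}}<1$. The paper's own proof simply writes $\rho(A)$ where $\|A\|_{\mathrm{op}}$ should stand, with no further justification. You have correctly identified that this substitution is the crux; the problem is that neither of your two repair routes actually supplies it. The weighted-norm route yields $\|F(x,u)-F(y,u)\|_*\le L\,\kappa(P)\,(\rho(A)+\varepsilon)\,\|x-y\|_*$, and the factor $\kappa(P)$ is not controlled --- it typically blows up precisely as $\varepsilon\to 0$ --- so no contraction constant $r<1$ is obtained. (The obstruction disappears only when $P$ can be taken diagonal, since a diagonal $P$ commutes with the diagonal slope matrix $D_{a,b}$; that is not the general case, because the similarity bringing $\|PAP^{-1}\|$ near $\rho(A)$ involves a Schur unitary.) The $m$-fold composition route fails for a structural reason: $F^{(m)}$ has the form $\sigma(A\sigma(A\cdots)+\cdots)$, so the slope matrices interleave with $A$ and one must estimate $\|D_mAD_{m-1}A\cdots D_1A\|$, a joint-spectral-radius quantity that is not governed by $\rho(A)^m=\lim_m\|A^m\|$.

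In short, the gap is genuine and not closable by these means: $L\,\rho(A)<1$ does not make $F$ a contraction in any fixed norm, and it is known in the echo state network literature that the spectral-radius condition is in general only necessary, not sufficient, for the echo state property. What is sufficient is $L\,\|A\|_{\mathrm{op}}<1$ (largest singular value), and that is exactly what your first inequality delivers; under that reading of the hypothesis your argument terminates at the first display, matching the paper's one-line proof, and everything after it should be deleted rather than repaired.
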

    \begin{proof}
    Following Lemma \ref{contraction}, it suffices to prove that $\sigma(A x + B u + \xi )$ is a contraction with respect to $x$. It follows from the hypothesis $|L\cdot \sigma_{max}(A)|< 1$ since 
    \begin{align*}
    \norm{\sigma(A x + B u + \xi ) - \sigma(A y + B u + \xi )}\le \\ \le |L\cdot \sigma_{max}(A)|\norm{x-y}\,.
    \end{align*}
    \end{proof}
    Following Corollary \ref{univer}, the class of reservoir computers of the form \eqref{esn.ese} is universal as long as the associated functionals lay in a polynomial algebra. Given two functionals $H^{A_1,B_1}_p$ and $H^{A_2,B_2}_q$, associated to two given instances of the system in Eq. \eqref{esn.ese}, their sum and product are respectively given by 
$ H^{A_{12}, B_{12}}_{p+q} \text{ and } H^{A_{12}, B_{12}}_{p\cdot q} $
    with\footnotemark$A_{12} = A_1 \oplus A_2$ and $B_{12} = B_1 \oplus B_2$. It remains to verify that the system with reservoir connections $A_1\oplus A_2$ has the echo state property, which follows immediately since $\sigma_{max}( A_1 \oplus A_2 ) = \max\left(\sigma_{max}(A_1), \sigma_{max}(A_2)\right)$. Thus, recalling that $(I_K,d_\omega)$ is a compact metric space and that functionals with fading memory are continuous with respect to $d_\omega$ we have proved the following, applying Theorem \ref{univer}. 
\begin{theorem}
    The class of echo state networks of the form \eqref{esn.ese} with reservoir connections such that $\sigma(A)<1$ is universal, thus it can approximate any fading memory functional defined on $I^K$.
\end{theorem}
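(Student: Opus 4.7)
My plan is to obtain the theorem as a direct application of Corollary \ref{univer} to the family $\cH_\cR$ of reservoir functionals $H^{A,B}_p$ arising from the ESNs in \eqref{esn.ese} with $\sigma(A)<1$. Under the standing hypothesis of that corollary that the input set $(I_-,d_\omega)$ is a compact metric space, I only need to verify that $\cH_\cR$ (i) has fading memory with respect to a common $\omega$, (ii) contains the constant functionals, (iii) is a polynomial algebra, and (iv) separates orbits.

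Conditions (i)--(iii) are essentially supplied by the material immediately preceding the theorem. Fading memory follows from the spectral lemma above: $\sigma(A)<1$ together with the Lipschitz constant $L$ of the squashing function gives $|L\rho(A)|<1$, so Lemma \ref{contraction} produces both the echo state property and fading memory of $H^{A,B}_p$; since the fading function in Lemma \ref{metric.space} is a free parameter, a single $\omega$ can be fixed once and for all. Constants are obtained by taking $p$ constant. For the polynomial algebra, given $H^{A_1,B_1}_{p_1}$ and $H^{A_2,B_2}_{p_2}$ I would form the block-diagonal ESN with parameters $(A_1\oplus A_2,\ B_1\oplus B_2,\ \xi_1\oplus\xi_2)$ and polynomial readout $p_1+\lambda p_2$ (resp.\ $p_1p_2$). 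Because $\sigma$ acts coordinatewise, the combined state factors as the direct sum of the two independent states, and the readout reproduces the desired sum or product of functionals. The spectral condition is preserved by $\rho(A_1\oplus A_2)=\max(\rho(A_1),\rho(A_2))<1$, so the resulting functional remains in $\cH_\cR$.

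The delicate part is (iv), separation of orbits. For $u\neq v$ in $I_-$, pick a coordinate $i$ and a time $t_0$ with $u^{(i)}_{t_0}\neq v^{(i)}_{t_0}$; by time invariance one may assume $t_0=0$. I would then take a scalar reservoir ($N=1$) with $A=\alpha\in(0,1)$ small, $B$ equal to the $i$-th coordinate projection, $\xi\in\r$ a free bias and a linear readout $p(x)=x$. Unrolling the fixed-point recursion writes the functional's value as a convergent series in $\alpha$ whose leading term is $\sigma(u^{(i)}_0+\xi)$, with a tail bounded uniformly by $O(\alpha)$ thanks to the uniform boundedness of $I_-$ and the Lipschitz character of $\sigma$. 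Since a squashing function such as $\tanh$ is strictly monotone, for $\alpha$ sufficiently small and a suitable $\xi$ this functional separates $u$ from $v$. The main obstacle will be making the tail estimate uniform over pairs $(u,v)\in I_-\times I_-$, and in particular handling pairs that agree on the recent past and differ only far back; this can be addressed by first time-shifting the argument so that the disagreement falls in a neighbourhood of $0$, and then invoking the previous estimate with $\alpha$ chosen small enough relative to the size of the disagreement. Once (iv) is secured, Corollary \ref{univer} delivers universality at once.
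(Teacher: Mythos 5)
Your overall route is exactly the paper's: invoke Corollary \ref{univer}, obtain the echo state property and fading memory from the contraction criterion of Lemma \ref{contraction} via $|L\rho(A)|<1$, and realize sums and products of functionals by the block-diagonal ESN with $A_1\oplus A_2$, $B_1\oplus B_2$ and readout $p_1+\lambda p_2$ or $p_1\cdot p_2$ acting blockwise, the spectral condition being preserved because $\rho(A_1\oplus A_2)=\max\left(\rho(A_1),\rho(A_2)\right)<1$. That is essentially all the paper itself proves for this theorem: its argument consists of the polynomial-algebra check plus the spectral condition, and it leaves the separation and constant-functional hypotheses of Corollary \ref{univer} unverified. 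So the only place you genuinely depart from the paper is your item (iv), and that is also where your sketch has a gap.

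The gap is in the ``time-shifting'' step. The separating object must be a functional on restricted orbits, i.e.\ (via Lemma \ref{bijection}) the value of the reservoir filter at $t=0$. You cannot ``first time-shift the argument so that the disagreement falls in a neighbourhood of $0$'': the pair $u\neq v$ is given once and for all on $\z_-$, and the functional evaluated at $0$ must detect a disagreement that may sit arbitrarily far in the past; shifting both orbits produces a different pair, not a reduction of the problem. As written, your expansion with leading term $\sigma\bigl(u^{(i)}_0+\xi\bigr)$ only separates orbits that differ at time $0$. The repair runs forward rather than by shifting: let $t_0=\max\{t\le 0: u^{(i)}_t\neq v^{(i)}_t\}$, which exists in discrete time; choose the scalar reservoir with $\alpha\neq 0$ so small that $2|\alpha|\sup|\sigma|<|b|\,\bigl|u^{(i)}_{t_0}-v^{(i)}_{t_0}\bigr|$, so that the input term dominates the accumulated state discrepancy and strict monotonicity of $\sigma$ gives $x_{t_0}(u)\neq x_{t_0}(v)$; then, since for $t_0<t\le 0$ the inputs coincide and $x\mapsto\sigma(\alpha x+bu_t+\xi)$ is injective, the inequality propagates to $x_0(u)\neq x_0(v)$. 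Note that $\alpha$ may depend on the pair $(u,v)$, so the uniformity over pairs you worry about is not needed: separation only requires one functional per pair. With that fixed (and the trivial observation that constant readouts yield the constant functionals), your argument is complete --- indeed more complete than the paper's own.
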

\begin{remark}
  Typically, in practical implementations, linear readouts are sufficient to perform many specific tasks. Indeed, even the class of ESN with linear readouts only (so, $p(x_t) = C x_t + D$  in \eqref{esn.ese}) has been proven to be universal \cite{grig2017,grig2018}.  
\end{remark}

\footnotetext{we denote $A\oplus B = \begin{pmatrix}
A & 0 \\
0 & B 
\end{pmatrix}$}
\subsection{Recasting universality of liquid state machines in the unified framework}
Liquid state machines are echo reservoir computers described by
\begin{equation}\label{lsm}
    \begin{cases}
        x_{t} = (Bu)_t\\
        y_{t} = h(x_{t}),\qquad \forall t \in \r
    \end{cases} 
\end{equation}
with $x_t\in\r^n$ and $B = (B_1,\dots,B_N)$ a N-dimensional causal and time-invariant filter. If the input functions $u$ are continuous, uniformly bounded, and uniformly Lipschitz functions of time, many implementations of the family of filters $B$ and readout $h$ are possible, so that separation of points, constants representation and fading memory are ensured. Then, the corresponding class of liquid state machines is universal, as a consequence of Theorem \ref{univer}. We refer to \cite{mass.univ} for a precise statement. \\
\\
Here we discuss an implementation of an LSM in the spirit of neural microcircuit networks, considering a reservoir composed of a network of $N$ connected integrate-and-fire spiking neurons \cite{maass.nat}. The current liquid state $x_t$ is an N-dimensional vector that represents the contribution of each neuron to the membrane potential at time $t$. Formally, we can describe the associated filters as linear operators with exponential decay (see \cite{maass.nat} for more details) taking spike train inputs $u\in I^\Delta$
\begin{align*}
 B_i u_t \coloneqq \int_{-\infty}^t b_i(s) e^{s-t} \delta_u \, ds   
\end{align*}
where we are denoting with $\delta_u$ the delta function $\delta_u(t) \coloneqq \sum_{s\in u}\delta(t - s)$, and with $b_i(s)$ a continuous, bounded, real function. Such a filter has fading memory with $\omega_t = e^t$. Moreover, it is causal and time-invariant by construction. The readout function $h$ is implemented through a subset of firing neurons that receive connections from any neuron in the reservoir, but that are not mutually connected. Concretely, the current firing activity of this subset of neurons is interpreted as the analogical output of the readout function. It is known that such functions $h$ are universal approximators, namely, they approximate with arbitrary precision any given real function \cite{maass.winner}.
\begin{theorem}
    The class of liquid state machines
    \begin{equation}\label{lsm.teo}
       \begin{cases}
        x_{t} = (Bu)_t = (B_1,\dots,B_N)(u)_t\\
        y_{t} = h(x_{t}),\qquad \forall t \in \r, 
    \end{cases}   
    \end{equation}
    for some $N\in \n$, with $B_i$ linear, exponential decay filter and $h$ any analogical function described above, is universal.
\end{theorem}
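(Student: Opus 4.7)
The plan is to verify the four hypotheses of Corollary \ref{univer} for the class $\cR$ of LSMs of the form \eqref{lsm.teo}, using the compactness of $(I^\Delta_-, d^c_\omega)$ already established in the previous lemma. Let $\cH_\cR$ denote the family of functionals associated to $\cR$. I would first observe that every element $H = h\circ B$ of $\cH_\cR$ has fading memory: the filters $B_i$ have fading memory with respect to $\omega_t = e^t$ by the preceding discussion, and $h$ is continuous, so the composition is continuous with respect to $d^c_\omega$. Constant functionals are obtained trivially by taking $N=1$ and $h$ a constant map, which belongs to the analogical readout class.

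Next I would handle the polynomial algebra structure, which is the conceptual core. Given two functionals $H_1 = h_1 \circ B^{(1)}$ and $H_2 = h_2 \circ B^{(2)}$ in $\cH_\cR$, with reservoirs of sizes $N_1, N_2$, I would build the direct sum system with $N_1 + N_2$ neurons whose multidimensional filter is $B^{(1)} \oplus B^{(2)}$ (the concatenation), which is again a tuple of linear exponential decay filters of the required form. Choosing readouts $(x,y)\mapsto h_1(x) + \lambda h_2(y)$ and $(x,y)\mapsto h_1(x) \cdot h_2(y)$ — both of which are themselves analogical functions since that class of readouts is closed under the required algebraic manipulations (sums and products of universal approximators remain universal approximators on compact domains) — yields the required elements $R_+^\lambda, R_\times \in \cH_\cR$. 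This shows $\cH_\cR$ is a polynomial algebra.

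The remaining and most delicate point is separation of orbits: for $u\neq v$ in $I^\Delta_-$ I must exhibit a single-neuron system $H = h \circ B_b$ with $(B_b u)_t = \int_{-\infty}^t b(s) e^{s-t}\,\delta_u\,ds$ and analogical $h$ such that $Hu \neq Hv$. I would proceed by showing that the family of linear filters $\{B_b : b \text{ continuous and bounded}\}$ already separates spike trains. If $u\neq v$, then the smoothed functions $f_u, f_v$ differ on some interval, so one can select $b$ supported near that interval and a time $t_0$ such that $(B_b u)_{t_0} \neq (B_b v)_{t_0}$; extending to a time-invariant statement via the associated functional (evaluating at $t=0$ on a time-shifted restricted orbit) gives the separation at the functional level. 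Composing with a non-constant $h$ preserves the distinction.

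The main obstacle will be precisely this separation argument: one has to exploit the freedom in choosing $b(s)$ to distinguish any two distinct discrete sets $u,v\subset\r_-$ satisfying the refractory constraint, and carefully check that the smoothed approximation $f_u$ used to define $d^c_\omega$ is compatible with the linear exponential-decay filter structure. Once separation is secured, all hypotheses of Corollary \ref{univer} are verified and universality of $\cR$ follows directly.
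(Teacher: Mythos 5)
Your verification of fading memory, constants, and separation matches the paper's proof in substance (the paper's separation argument is exactly your idea, stated at $t=0$: choose $b$ with $\sum_{s\in u}e^s b(s)\neq\sum_{s\in v}e^s b(s)$). The problem is the polynomial algebra step. Corollary \ref{univer} requires \emph{exact} closure: for $R_1=h_1\circ B^{(1)}$ and $R_2=h_2\circ B^{(2)}$ in $\cH_\cR$ you must exhibit elements of $\cH_\cR$ \emph{equal} to $R_1+\lambda R_2$ and $R_1\cdot R_2$. Your justification --- that ``sums and products of universal approximators remain universal approximators'' --- establishes only that the maps $(x,y)\mapsto h_1(x)+\lambda h_2(y)$ and $(x,y)\mapsto h_1(x)h_2(y)$ can be \emph{approximated} by analogical readouts, not that they lie in that class. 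The analogical readouts (pools of spiking neurons) form a universal approximation class, not an algebra, so the hypothesis of Corollary \ref{univer} is not literally verified and your invocation of it does not go through as written.

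The paper routes around this by not using Corollary \ref{univer} at all for this class. It applies Theorem \ref{func.dens} directly to the family of functionals induced by the single linear exponential-decay filters: the resulting approximant $p(R_1,\dots,R_l)$ is \emph{tautologically} the functional of an LSM of the form \eqref{lsm.teo} with reservoir $(B_1,\dots,B_l)$ and \emph{polynomial} readout $p$ (polynomials of the filter outputs form an exact algebra, so no closure issue arises). Only at the very end is $p$ replaced by an analogical readout $h$, using the universal approximation property of $h$ on the bounded set of reservoir states, at the cost of a second $\epsilon$. Your argument can be repaired the same way: either pass through the auxiliary polynomial-readout class and approximate the readout last, or weaken the corollary to tolerate an approximately closed algebra with an $\epsilon/2$ bookkeeping step. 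As written, however, the algebra-closure claim is a genuine gap.
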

\begin{proof}
Fading memory of $B_i$ implies that the functional associated with the LSM in Eq. \eqref{lsm.teo} has fading memory, thus by  Def. \ref{fm-spike}, it is continuous with respect to the metric $d^c_\omega$. The separation property is guaranteed by the definition of the filters since it is enough to choose any function $b$ such that $\sum_{s\in u} e^s b(s) \not = \sum_{s\in v} e^s b(s) $. Then, Lemma \ref{func.dens} ensures that, for any given causal, time-invariant,
 real functional $H\colon I^K\rightarrow\r$ with fading memory, for any $\epsilon>0$ there exists a N-dimensional reservoir filter $B$, for some $N\in\n$, and a polynomial $p$ such that the functional $R_{(B,p)}$ related to the LSM 
  \begin{equation*}
       \begin{cases}
        x_{t} = (Bu)_t = (B_1,\dots,B_N)(u)_t\\
        y_{t} = p(x_{t})
    \end{cases}   
    \end{equation*}
    satisfies $|R_{(B,p)} u - Hu|<\epsilon$ for any $u \in I^\Delta_-$. Eventually, one can replace $p$ with an analog readout function $h$ of the type discussed above, since they are universal approximators of real functions.
\end{proof}
\section{Quantum echo state networks and universality}
The last Section applies the concept of universality to quantum reservoirs based on qubit registers \cite{giappo.1, naka.2019}. Such extension exploits the density matrix representation of a quantum state and involves spatial multiplexing. An effective implementation of provably universal quantum reservoir systems, exploiting dissipative quantum systems, was proposed in Refs. \cite{chen1,chen2}.
\subsection{Quantum echo state networks}
Let $\cH$ be a Hilbert space, that we assume to be finite-dimensional for the sake of simplicity, and let $\cB(\cH)$ be the space of bounded operators on $\cH$, equipped with the Schatten norm $\norm{T} = \text{Tr}( A^\dag A)$. Any pure or mixed state in $\cH$ may be represented by a density operator $\rho\in \cB(\cH)$, namely, by an element in the compact convex subspace
\begin{gather*}
    \cS(\cH) = \left\{\rho \in \cB(\cH): \text{Tr}(\rho)=1, \rho \ge 0,\, \rho = \rho^\dag \,\right\}.
\end{gather*}
At each time step, the reservoir state is represented by a density operator defined on a suitable Hilbert space. The quantum dynamics of the reservoir is described by a quantum channel (see \cite{gqi} for a detailed introduction of this concept). 
\begin{definition}
    A \textbf{quantum channel} is any linear map
    \begin{gather*}
        C: \cB(\cH) \rightarrow \cB(\cH)
    \end{gather*}
    which is completely positive and trace-preserving (CPTC map). 
\end{definition}
\begin{remark}
    It is straightforward to notice that $\cS(\cH)$ is closed with respect to the action of any quantum channel. Namely, the action of any quantum channel $C$ restricts to
\begin{gather*}
        C: \cS(\cH) \rightarrow \cS(\cH)\,.
    \end{gather*}
\end{remark}
A quantum echo state network (qESN) is thus generally described by the following equation
\begin{equation}\label{qrc}
    \begin{cases}
        \rho_{t+1} = C(\rho_t, u_{t+1}) \\
        y_{t+1} = h(\rho_{t+1}),\qquad \forall t\in \z
    \end{cases}
\end{equation}
with, using the terminology introduced in Section \ref{termino}, $(u_t)_{t\in\z}$ and $(y_t)_{t\in\z}$ discrete time orbits in some normed spaces and $(\rho_t)_{t\in\z}$ an orbit in the normed space $\cS(\cH)$ . \\ \\
The notions of echo state property and fading memory essentially replicate those for classical echo state networks (see Definitions \ref{fad.ts} and \ref{esp}) since ESN and qESN ultimately share the same mathematical description. 
In particular, these properties are ensured by the strict contractivity of the CPTP map $C$ that appears in \eqref{qrc}, namely
\begin{align}\label{rle}
 \norm{C(\rho_1, u) - C(\rho_2, u) } \le r \norm{\rho_1 - \rho_2}   
\end{align}
for some $r<1$, as a consequence of Lemma \ref{contraction}.  In this regard, we recall that any CPTP map is non-expansive, meaning that Eq. \ref{rle} holds with $r\le 1$ (see Theorem 9.2 in \cite{nielsen2010quantum}).
With the same argument developed in Lemma \ref{esn.is.erc}, one concludes that a qESN with echo state property falls in our definition of echo reservoir computer. 
\subsection{A class of quantum reservoir computers}\label{qESN.sub}
In these last two subsections, we briefly present an embodiment of a qESN exploiting a $N$-qubit register proposed in \cite{naka.2019} and revised in \cite{chen1,chen2}.  Let $\cH_2$ be the two-dimensional complex Hilbert space associated with a single qubit and $\cH^N = \bigotimes_{j=1}^N \cH_2$ the $2^N$-dimensional Hilbert space of the whole register.  Denoting with $I = \sigma_{00},\, X = \sigma_{01},\, Y = \sigma_{10}, Z = \sigma_{11}$ the Pauli operators, the products $\left\{P_i\right\} = \left\{\bigotimes_{l=1}^N \sigma_{i_{2k-1}i_{2k}}\right\}$ form a basis of the normed space $\cB(\cH^N)$. Then, the density operator $\rho$ that describes the state of the reservoir is represented by a $4^N$-dimensional vector $r = \left(r^{00\dots00},r^{01\dots00},\dots,r^{11\dots11}\right)$ with
\begin{equation*}
    r^i = \frac{1}{2^N}Tr\left[P_i\rho\right]\,.
\end{equation*}
We consider bounded time series $\left\{u_t\right\}_{t\in\z}$ with $ u_t \in  [0,1]$ as inputs of the reservoir system. The inputs are encoded in the dynamics of the reservoir via a quantum channel described at each time step $t$, in the Pauli basis, by a matrix $S_{u_t}$, written as \footnotemark
\begin{align*}
    (S_{u_t})_{ij} = Tr\left[P_j\left(\frac{I + (1-2 u_t)Z}{2}\otimes Tr_1\left(P_i\right)\right)\right]\,.
\end{align*}
\footnotetext{here $Tr_1$ is the partial trace with respect to the first qubit}
The evolution of the reservoir state $r$ is described by the quantum channel
$
 r_{t+1} = S U_H^\tau  r_t
$,
where $U_H^\tau$ is the quantum dynamics induced by some Hamiltonian $H$\footnotemark \footnotetext{formally, one can write$(U^\tau_H)_{ij} = Tr\left[P_je^{-iH\tau}P_i e^{iH\tau} \right]$}. Then, the readout is a linear combination of a subset of some nodes in the reservoir. For example, one can use the $N$ values corresponding to the $Z$ Pauli operators as true nodes. Formally, denoting $z^1 = r^{0100\dots00}, z^2 = r^{0001\dots00},\dots,z^N = r^{0000\dots01}$, the related qESN is described by
\begin{equation}\label{quantum.res}
    \begin{cases}
        r_{t+1} = S_{u_t} U_H^\tau r_t\\
        y_{t+1} = \sum_{i=1}^N w_i z_{t+1}^i 
        \end{cases}
\end{equation}
for some coefficients $w_i \in \r$. We denote each instance of this system as $\cQ_{(H,w)}$. We assume, moreover, that the unitary evolution $U_H^\tau$ is such that the system in Eq. \eqref{quantum.res} is contractive. Thus, the functional associated is well-defined and it has the echo state property and fading memory. 
\begin{table*}[h]
    \centering
    \caption{Comparison between the three paradigms of reservoir computers discussed in the main text and the generalized notion of echo reservoir computer (ERC) defined in this work}
    \begin{tabular}{m{1cm} m{2.6cm} m{2.8cm} m{1.8cm} m{3cm} m{2.8cm}}
     \toprule
        \textbf{Type} & \textbf{Input}  & \textbf{Reservoir} &  \textbf{Time evolution} & \textbf{Readout} & \textbf{Reference} \\
        \midrule
        ESN & Time series in $\r^n$   & Artificial neural network & $T = \z$ & Linear or polynomial function of every node in the reservoir & \cite{jaeger.first.long, grig2017, grig2018},\\
        \addlinespace
        LSM  &  Spike trains & Spiking neural network & $T = \r$ & Analogical function of a subset of neurons & \cite{maass.nat} \\
        \addlinespace
        qESN & Time series in $\r^n$   & Qubit register & $T = \z$ & Linear function of the $Z$ Pauli operators & \cite{naka.2019, chen2}\\ 
        \addlinespace
        ERC & Orbits in any normed space  & Any system in a normed space & $T = \z$ or $T = \r$ & Generic functional applied to the reservoir state & This work\\
    \bottomrule
    \end{tabular}
\end{table*}
\subsection{Universality by spatial multiplexing}
Following Theorem \ref{univer}, we show how the class of spatial multiplexed realizations of qESN generated from the systems described in Subsection \ref{qESN.sub} is universal, as spatial multiplexity allows the representation of sums and products of the associated functionals. Informally, spatial multiplexing consists of preparing uncoupled instances of a reservoir system with different internal parameters and running them in parallel with the same input. More formally, let $\cQ_{(H_1,w)}, \cQ_{(H_2,v)}$ be two independent qESN systems with respectively $N_1$ and $N_2$ qubit registers. Denote with $R_1$ and $R_2$ the respective associated functionals, and with $r^1$ and $r^2$ the respective reservoir states represented in the Pauli basis. Spatial multiplexity provides the construction of the system whose associated functional is the product (or, similarly the sum) of $R_1$ and $R_2$. The reservoir state $r$ of the spatial multiplexed system is the joint reservoir state of each isolated register, namely $r = (r^1, r^2)$. Consequently, we denote the true nodes, corresponding to the $Z$ operators of the two registers, respectively as $z^1,\dots,z^{N_1}$ and $z^{N_1 + 1},\dots,z^{N_1 + N_2}$.
Then, the functional associated with the qESN
\begin{align*}
     \begin{cases}
        r_{t+1} = \left(S_{u_t} U_{H_1}\,r^1_t, \,S_{u_t} U_{H_2}\,r^2_t\right)\\
        y_{t+1} = \left(\sum_{i=1}^{N_1} w_i z_{t+1}^i\right)\cdot\left( \sum_{i={N_1 + 1}}^{N_2} v_i z_{t+1}^i\right)
        \end{cases}
\end{align*}
is the product functional $R_1 \cdot R_2$. Namely, recalling the notation in Eq. \eqref{poly}, for any time series input $\left\{u_t\right\}_{t\in\z}$, one has 
\begin{equation*}
    R_1(u)\cdot R_2(u) =  (R_1 \cdot R_2) (u)\,.
\end{equation*}
The sum is built identically by exploiting $\sum_{i=1}^{N_1} w_i z_{t+1}^i + \sum_{i={N_1 + 1}}^{N_2} v_i z_{t+1}^i$ as readout.
 As long as the unitary dynamics induces fading memory and is able to separate different inputs, we can conclude that the family of quantum reservoir computers that are spatial multiplexed instances of systems described in Eq.\eqref{quantum.res} is universal. In many cases, the proof of separability is a challenging issue. We refer to \cite{chen2} for a detailed discussion on how separability is achieved formally. 
\section{Conclusions}
A unified theoretical framework of reservoir computing is defined and the minimal set of sufficient conditions to ensure that a class of reservoir computers serves as a universal approximator for functionals is demonstrated. Such conditions turn out to be the polynomial algebra structure of the set of associated functionals and their fading memory, respectively. We have shown that such a unified framework not only recovers the two major classical paradigms of reservoir computers, namely the echo state networks and the liquid state machines but also extends to the construction of universal reservoir computers in the arising context of quantum reservoir computing. Guided by our general theorem, we have shown that spatial multiplexing is a computational resource when dealing with quantum reservoirs since it assures universality.
To conclude, echo state computers behave as a single class of universal computing machines including both classical and quantum systems, providing a solid context to develop quantum reservoir computing.

\section*{Declaration of competing interest}
The authors declare that they have no known competing financial interests or personal relationships that could have appeared to influence the work reported in this paper.
\section*{Acknowledgements}
This work was supported by the PRIN-PNRR grant PhysiComp funded by the Italian Ministry of University and Research (MUR). 
\bibliographystyle{elsarticle-num}
\bibliography{refs}
\small{\textbf{Francesco Monzani} received his Ph.D. in Mathematics from the Department of Mathematics at the University of Milan in 2024, with a thesis on infinite-dimensional Hamiltonian dynamical systems. He is currently a postdoctoral researcher in the Department of Physics at the University of Milan, working in the lab led by prof. Enrico Prati. His current research interest lies in the mathematical formalization of recurrent networks, with a focus on quantum and neuromorphic computing.} \\ \\
\small{\textbf{Enrico Prati} received the Ph.D. degree from Università degli Studi di Pisa in 2002.  He has conducted research work in quantum computing and quantum artificial intelligence at National Council of Italy until 2021. He is currently associated professor at Physics Department of Università degli Studi di Milano, teaching quantum computing and artificial intelligence models for physics. He coordinates four research projects on applications of quantum computing.
 He is author or co-author of more than 140 scientific papers published in international journals or conferences, including Nature Quantum Information, Nature Nanotechnology and Nature Communication Physics.}

\end{document}